\theoremstyle{definition}\newtheorem{algorithm}[theorem]{Algorithm}
\newcommand{\B}{\mathcal{B}}
\newcommand{\C}{\mathcal{C}}
\newcommand{\D}{\mathcal{D}}
\newcommand{\F}{\mathcal{F}}
\newcommand{\V}{\mathcal{V}}
\renewcommand{\O}{\mathcal{O}}
\newcommand{\Rules}{\mathcal{R}}
\newcommand{\Var}{\mathit{Var}}
\newcommand{\Data}{\mathcal{D}\!\!\mathcal{A}}
\newcommand{\Values}{\mathcal{V}\!\!\mathcal{A}}
\newcommand{\arrtype}{\Rightarrow}
\newcommand{\arrz}{\to}
\newcommand{\arrr}{\arrz^*}
\newcommand{\supterm}{\rhd}
\newcommand{\suptermeq}{\unrhd}
\newcommand{\apps}[3]{#1\ #2 \cdots #3}
\newcommand{\arity}{\mathit{arity}}
\newcommand{\typeorder}[1]{\mathit{o}(#1)}
\newcommand{\strue}{\symb{true}}
\newcommand{\sfalse}{\symb{false}}
\newcommand{\exptime}[1]{\ensuremath{\mathtt{EXP}^{#1}\mathtt{TIME}}}
\newcommand{\nexptime}[1]{\ensuremath{\mathtt{NEXP}^{#1}\mathtt{TIME}}}
\renewcommand{\P}{\ensuremath{\mathtt{P}}}
\newcommand{\elementary}{\ensuremath{\mathtt{ELEMENTARY}}}
\newcommand{\symb}[1]{\mathtt{#1}}
\newcommand{\blank}{\textbf{\textvisiblespace}}
\title{Non-deterministic Characterisations%
\footnote{Supported by the Marie Sk{\l}odowska-Curie action ``HORIP'', program H2020-MSCA-IF-2014, 658162.}}
\author[]{Cynthia Kop}
\affil{Department of Computer Science, University of Copenhagen (DIKU)\\
  \texttt{kop@di.ku.dk}}
\authorrunning{C. Kop}
\begin{document}

\maketitle

\begin{abstract}
In this paper, we extend Jones' result---that cons-free programming
with $k^{\text{th}}$-order data and a call-by-value strategy
characterises $\exptime{k}$---to a more general setting, including
pattern-matching and non-deterministic choice.  We show that the
addition of non-determinism is unexpectedly powerful in the
higher-order setting.  Nevertheless, we can obtain a non-deterministic
parallel to Jones' hierarchy result by appropriate restricting rule
formation.
\end{abstract}

\section{Introduction}

In~\cite{jon:01}, Jones introduces \emph{cons-free programming}.
Working with a small functional programming language, cons-free
programs are defined to be \emph{read-only}: recursive data cannot
be created or altered (beyond taking sub-expressions), only read from 
the input.  By imposing further restrictions on data order and
recursion style, classes of cons-free programs turn out to characterise
various deterministic classes in the time and space hierarchies of
computational complexity.
Most relevantly to this work, cons-free programs with data order $k$
characterise the class $\exptime{k}$ of decision problems decidable
in $\O(\exp_2^k(a \cdot n^b))$ on a Turing Machine.

The classes thus characterised are all \emph{deterministic}: they
concern the time and space to solve decision problems on a
deterministic Turing Machine.  As the language considered by Jones is
deterministic, a natural question is whether adding non-deterministic
choice to the language would increase expressivity accordingly.  The
answer, at least in the base case, is \emph{no}: following an early
result by Cook~\cite{coo:73}, Bonfante shows~\cite{bon:06} that adding
a non-deterministic choice operator to cons-free programs with data
order $0$ makes no difference in expressivity: whether with or without
non-deterministic choice, such programs characterise \P.

In this paper, we consider the generalisation of this question: does
adding non-deterministic choice give more expressivity when data of
order greater than $0$ is admitted?  Surprisingly, the answer is yes!
However, we do not obtain the non-deterministic classes; rather,
non-deterministic cons-free programs of any data order $\geq 1$
characterise \elementary, the class $\exptime{0} \cup \exptime{1}
\cup \exptime{2} \cup \dots$.  As this is less useful for complexity
arguments, we amend cons-freeness with a further restriction---unary
variables---which allows us to obtain the expected generalisation:
that (thus restricted) cons-free programs of data order $k$
characterise $\exptime{k}$, whether or not non-deterministic choice
is allowed.

We also generalise Jones' language with pattern matching and
user-defined constructors.

\section{Cons-free programming}

For greater generality---and greater ease of expressing examples---we
extend Jones' language to a limited functional programming language
with pattern matching.  We will use terminology from the term
rewriting world, but very little of the possibilities of this world.

\subsection{Higher-order Programs}

We consider programs using simple types, including product types.
The \emph{type order} $\typeorder{\sigma}$ of a type $\sigma$ is
defined as follows: $\typeorder{\kappa} = 0$ for
$\kappa$ a \emph{sort} (base type), $\typeorder{\sigma \times \tau} =
\max(\typeorder{\sigma},\typeorder{\tau})$ and $\typeorder{\sigma
\arrtype \tau} = \max(\typeorder{\sigma}+1,\typeorder{\tau})$.

Assume given three disjoint set of identifiers: $\C$ of
\emph{constructors}, $\D$ of \emph{defined symbols} and $\V$ of
\emph{variables}; each symbol is equipped with a type.
Following Jones, we limit interest to constructors with a type
$\iota_1 \arrtype \dots \arrtype \iota_m \arrtype \kappa$ where all
$\iota_i$ are types of order $0$ and $\kappa$ is a \emph{sort}.
\emph{Terms} are expressions $s$ such that $s : \sigma$ can be derived
for some type $\sigma$ using the clauses:
\begin{itemize}
\item $\apps{c}{s_1}{s_m} : \kappa$ if $c : \iota_1 \arrtype \dots
  \arrtype \iota_m \arrtype \kappa \in \C$ and each $s_i : \iota_i$
\item $\apps{a}{s_1}{s_n} : \tau$ if $a : \sigma_1 \arrtype \dots
  \arrtype \sigma_n \arrtype \tau \in \V \cup \D$ and each $s_i :
  \sigma_i$
\item $(s,t) : \sigma \times \tau$ if $s : \sigma$ and $t : \tau$
\end{itemize}
Thus, constructors cannot be partially applied, while variables and
defined symbols can be.  If $s : \sigma$, we say $\sigma$ is the type
of $s$, and let $\Var(s)$ be the set of variables occurring in $s$.  A
term $s$ is \emph{ground} if $\Var(s) = \emptyset$.  We say $t$ is a
subterm of $s$, notation $s \suptermeq t$, if either $s = t$ or $s =
\apps{a}{ s_1}{s_n}$ with $a \in \C \cup \F \cup \V$ and $s_i
\suptermeq t$ for some $i$, or $s = (s_1,s_n)$ and $s_i \suptermeq t$
for some $i$.  Note that the head of an application is \emph{not} a
subterm of the application.

A \emph{rule} is a pair of terms $\apps{f}{\ell_1}{\ell_k} \arrz r$
such that (a) $f \in \D$, (b) no defined symbols occur in any
$\ell_i$, (c) no variable occurs more than once in $\apps{f}{\ell_1}{
\ell_k}$, (d) $\Var(r) \subseteq \Var(\apps{f}{\ell_1}{\ell_k})$, and
(e) $r$ has the same type as $\apps{f}{\ell_1}{\ell_k}$.
A \emph{substitution} $\gamma$ is a mapping from variables to ground
terms of the same type, and $s\gamma$ is obtained by replacing
variables $x$ in $s$ by $\gamma(x)$.

We fix a set $\Rules$ of rules, which are \emph{consistent}: if
$\apps{f}{\ell_1}{\ell_k} \arrz r$ and $\apps{f}{q_1}{q_n} \arrz s$
are both in $\Rules$, then $k = n$; we call $k$ the \emph{arity} of
$f$.  The set $\Data$ of \emph{data terms} consists of all ground
constructor terms.  The set $\Values$ of \emph{values} is given by:
(a) all data terms are values, (b) if $v,w$ are values, then $(v,w)$
is a value, (c) if $f \in \D$ has arity $k$, $n < k$ and $s_1,\dots,
s_n$ are values, then $\apps{f}{s_1}{s_n}$ is a value if it is
well-typed.  Note that values whose type is a sort are data terms.
The \emph{call-by-value} reduction relation on ground terms is
defined by:
\begin{itemize}
\item $(s,t) \arrr (v,w)$ if $s \arrr v$ and $t \arrr w$
\item $\apps{a}{s_1}{s_n} \arrr \apps{a}{v_1}{v_n}$ if each $s_i \arrr
  v_i$ and either $a \in \C$, or $a \in \D$ and $n < \arity(a)$
\item $\apps{f}{s_1}{s_m} \arrr w$ if there are values $v_1,\dots,v_m$
  and a rule $\apps{f}{\ell_1}{\ell_n} \arrz r$ with $n \leq m$ and
  substitution $\gamma$ such that each $s_i \arrr v_i = \ell_i\gamma$
  and $\apps{(r\gamma)}{v_{n+1}}{v_m} \arrr w$
\end{itemize}
Note that rule selection is non-deterministic; a choice operator might
for instance be implemented by having rules $\symb{choose}\ x\ y
\arrz x$ and $\symb{choose}\ x\ y \arrz y$.

\subsection{Cons-free Programs}

Since the purpose of this research is to find groups of programs
which can handle \emph{restricted} classes of Turing-computable
problems, we must impose certain limitations. In particular, we will
limit interest to \emph{cons-free} programs:

\begin{definition}\label{def:consfree}
A rule $\ell \arrz r$ is cons-free if for all subterms $r \suptermeq
s$ of the form $s = \apps{c}{s_1}{s_n}$ with $c \in \C$, we have:
$s \in \Data$ or $\ell \supterm s$.
A program is cons-free if all its rules are.
\end{definition}

This definition follows those for cons-free term rewriting
in~\cite{car:sim:14,kop:sim:16} in generalising Jones' definition
in~\cite{jon:01}; the latter fixes the constructors in the program
and therefore simply requires that the only non-constant constructor,
$\symb{cons}$, does not occur in any right-hand side.

In a cons-free program, if $v_1,\dots,v_n,w$ are all data terms, then
any data term occurring in the derivation of $\apps{f}{v_1}{v_n} \arrr
w$ is a subterm of some $v_i$.  This includes the result $w$.

\section{Turing Machines and decision problems}

In this paper, we particularly consider complexity classes of
\emph{decision problems}.  A decision problem is a set $A \subseteq
\{0,1\}^+$.  A deterministic Turing Machine \emph{decides} $A$ in
time $P(n)$ if every evaluation starting with a tape $\blank x_1\dots
x_n\blank\blank\dots$ completes in at most $P(n)$ steps, ending in
the $\mathsf{Accept}$ state if $x_1\dots x_n \in A$ and in the
$\mathsf{Reject}$ state otherwise.

Let $\exp_2^0(m) = m$ and $\exp_2^{k+1}(m) = \exp_2^k(2^{m}) = 2^{
\exp_2^k(m)}$.  The class
$\exptime{k}$ consists of those decision problems which can be decided
in $P(n) \leq \exp_2^k(a \cdot n^b)$ steps for some $a,b$.

\begin{definition}
A program $(\C,\D,\Rules)$ with constructors $\symb{true},
\symb{false} : \symb{bool},\symb{[]} : \symb{list}$ and
$\symb{::}$ (denoted infix) of type $\symb{bool} \arrtype \symb{list}
\arrtype \symb{list}$, and a defined symbol $\symb{start} :
\symb{list} \arrtype \symb{bool}$ \emph{accepts} a decision problem
$A$ if for all $\vec{x} = x_1 \dots x_n \in \{0,1\}^+$: $\vec{x} \in A$
if{f} $\symb{start}\ (\overline{x_1}\symb{::}\dots\symb{::}\overline{
x_n}\symb{::}\symb{[]}) \arrr \symb{true}$, where $\overline{x_i} =
\symb{true}$ if $x_i = 1$ and $\symb{false}$ if $x_i = 0$.
(Note that it is not required that \emph{all} evaluations end in
$\symb{true}$, just that there is at least one---and none if $x \notin
A$).
\end{definition}

\section{A lower bound for expressivity}

To give a lower bound on expressivity, we consider the following
result paraphrased from~\cite{jon:01}:

\begin{lemma}\label{lem:counting}
Suppose that, given an input list $cs ::= \overline{x_1}\symb{::}\dots
\symb{::}\overline{x_n}\symb{::}\symb{[]}$ of length $n$, we have a
representation of $0,\dots,P(n)$, symbols $\symb{seed},\symb{pred},
\symb{zero} \in \D$, and cons-free rules $\Rules$ with:
\begin{itemize}
\item $\symb{seed}\ cs \arrr v$ for $v$ a value representing $P(n)$
\item if $v$ represents $i > 0$, then $\symb{pred}\ cs\ v \arrr w$
  for $w$ a value representing $i-1$
\item if $v$ represents $i$, then $\symb{zero}\ cs\ i \arrr \strue$
  if{f} $i = 0$ and $\symb{zero}\ cs\ i \arrr \sfalse$ if{f} $i \neq
  0$
\end{itemize}
Then any problem which can be decided in time $P(n)$ is accepted by a
cons-free program whose data order is the same as that of $\Rules$,
and which is deterministic if{f} $\Rules$ is.
\end{lemma}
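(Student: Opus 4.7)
The plan is to encode a deterministic Turing Machine $M$ deciding the given problem in time $P(n)$ as a cons-free program which uses the symbols $\symb{seed}, \symb{pred}, \symb{zero}$ provided by $\Rules$ to walk through the $P(n)+1$ time steps of an execution of $M$. Since cons-free programs may not build constructor terms not already in the input, we avoid constructing any tape history explicitly; instead, we define functions which, on demand, compute the symbol at a given tape cell or the state-and-head-position at a given time, recursing downward on the time counter.

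Concretely, I would extend the signature with a $0$-ary constructor for each state and each tape symbol of $M$ (including $\blank$); all of these are data terms of type order $0$ and so cannot raise the data order. The two central symbols are $\symb{symbolAt}\ cs\ t\ p$, returning the symbol in cell $p$ at step $t$, and $\symb{stateAt}\ cs\ t$, returning the pair consisting of the current state (a $0$-ary constructor) and the head position (a counter value). Both recurse on $t$: the case split on whether $t=0$ is routed through an auxiliary defined symbol whose left-hand sides pattern-match on $\strue$ and $\sfalse$, applied to $\symb{zero}\ cs\ t$. In the $t=0$ branch the input bit at position $p$ is fetched by a helper $\symb{nth}\ cs\ xs\ p$ which walks $xs$ while decrementing $p$ using $\symb{pred}\ cs\ p$, returning $\blank$ once $xs$ is empty. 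In the $t>0$ branch we call $\symb{stateAt}\ cs\ (\symb{pred}\ cs\ t)$ and $\symb{symbolAt}\ cs\ (\symb{pred}\ cs\ t)\ p$ (together with one more call to read the symbol at the previous head position), then apply $M$'s transition function, which is encoded as a finite pattern-match on state/symbol constructors producing the updated symbol or state/head pair. Finally $\symb{start}\ cs$ computes $\symb{stateAt}\ cs\ (\symb{seed}\ cs)$ and returns $\strue$ iff the state component is the accepting state.

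A straightforward induction on $t$, using the specification of $\symb{seed}$, $\symb{pred}$, $\symb{zero}$, shows that $\symb{symbolAt}$ and $\symb{stateAt}$ faithfully represent $M$'s configuration after $t$ steps on $cs$. Cons-freeness follows by inspection of each added rule: the only constructor applications on right-hand sides are $0$-ary constructors (which are data terms) and subterms read out of $cs$ by pattern matching; pair construction $(s,t)$ is not a constructor application and is therefore unrestricted by Definition~\ref{def:consfree}. The data order equals that of $\Rules$ since counter values are only passed through the new symbols, never deconstructed. Determinism is preserved: the new rules have pairwise disjoint left-hand sides, so the only source of non-determinism is that already present in $\Rules$.

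The main obstacle is the interplay between call-by-value evaluation and the downward recursion on $t$: a naive ``if $t=0$ then $\ldots$ else $\ldots$'' defined inline would force evaluation of the $t-1$ branch before $\symb{zero}$ could fire, causing non-termination even at $t=0$. The fix is to isolate the case split in the helper whose rules match on $\strue$ versus $\sfalse$, so that the recursive call is constructed only in the rule selected after $\symb{zero}$ has reduced. A secondary subtlety is that tape positions can exceed the input length $n$; this is handled transparently by letting $\symb{nth}$ fall through to $\blank$ on the empty list, so the virtual tape is blank to the right of the input without any extra constructor creation.
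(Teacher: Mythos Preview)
Your proposal is correct and follows essentially the same approach as the paper: simulate the Turing Machine by defining mutually recursive functions that compute the state, head position, and tape contents at a given time step, recursing downward on the counter via $\symb{pred}$ and branching on $\symb{zero}$. The paper separates state and position into two functions $\symb{state}$ and $\symb{position}$ (plus $\symb{tape}$, your $\symb{symbolAt}$) rather than packaging them as a pair, and it leaves the call-by-value branching issue you discuss implicit, but the underlying construction is the same.
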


\begin{proof}[Proof Idea]
By simulating an evaluation of a Turing Machine.  This simulation
encodes all transitions of the machine as rules; a transition from
state $i$ to state $j$, reading symbol $r$, writing $w$ and moving
to the right is encoded by a rule $\symb{transition}\ \symb{i}\ 
\symb{r} \arrz (\symb{j},(\symb{w},\symb{R}))$.  In addition, there
are rules for $\symb{state}\ cs\ n$---which returns the state the
machine is in at time $n$---, $\symb{position}\ cs\ n$---which
returns the position of the tape reader---and $\symb{tape}\ cs\ n\ p$%
---for the symbol on the tape at position $p$ and time $n$.  Rules
are for instance: \\
$\phantom{XYZ}\symb{state}\ cs\ n \arrz
  \symb{ifthenelse}\ (\symb{zero}\ cs\ n)\ \symb{Start}\ 
  (\symb{fst}\ (\symb{transitionat}\ cs\ (\symb{pred}\ cs\ n)))$ \\
This returns $\symb{Start}$ at time $0$, and otherwise the state
reduced to in the last transition.
\end{proof}

\begin{example}\label{ex:lincount}
For $P(n) = (n+1)^2-1$, we can represent $i \in \{0,\dots,P(n)\}$ as
any pair $(l_1,l_2)$ of lists, where $i = |l_1| \cdot (n+1) + |l_2|$.
For the counting functions, we define:
\[
\begin{array}{rclcrcl}
\symb{seed}\ cs & \arrz & ([],[]) & &
\symb{zero}\ cs\ (\symb{[]},\symb{[]}) & \arrz & \strue \\
\symb{pred}\ cs\ (xs,y\symb{::}ys) & \arrz & (xs,ys) & &
\symb{zero}\ cs\ (xs,y\symb{::}ys) & \arrz & \sfalse \\
\symb{pred}\ cs\ (x\symb{::}xs,\symb{[]}) & \arrz & (xs,cs) & &
\symb{zero}\ cs\ (x\symb{::}xs,\symb{[]}) & \arrz & \sfalse \\
\end{array}
\]
\end{example}

\begin{lemma}\label{lem:binary}
For any $a,b > 0,k \geq 0$,there are cons-free, deterministic rules
$\Rules_{a,b}^k$ defining counting functions as in
Lemma~\ref{lem:counting} such that, for $P(n) = \exp_2^k(a \cdot
n^b)-1$, the numbers $\{0,\dots,P(n)\}$ can be represented.  All
function variables in $\Rules_{a,b}^k$ have a type $\sigma \arrtype
\symb{bool}$.
\end{lemma}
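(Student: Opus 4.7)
The proof proceeds by induction on $k$, encoding numbers at level $k \geq 1$ as \emph{bit-vectors}---values $f : \sigma_{k-1} \arrtype \symb{bool}$ where $\sigma_{k-1}$ is the representation type used at level $k-1$ and $f(p)$ is the $p$-th binary digit of the encoded number. Since level $k-1$ enumerates $N = \exp_2^{k-1}(a\cdot n^b)$ positions, such $f$ can denote any element of $\{0,\dots,2^N-1\}$, which is exactly $\{0,\dots,P(n)\}$. Bit-vectors arise as legal values by partial application of defined symbols of arity greater than the number of supplied arguments, and the only function-typed variables they introduce have type $\sigma_{k-1}\arrtype\symb{bool}$, as the statement requires.

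For the base case $k = 0$, fix $c$ with $(n+1)^c \geq a\cdot n^b$ for every $n \geq 1$ and encode $i \in \{0,\dots,a\cdot n^b - 1\}$ as a tuple of $c$ lists whose lengths lie in $\{0,\dots,n\}$ using positional notation in base $n+1$, directly generalising Example~\ref{ex:lincount}; the rules for $\symb{seed}$, $\symb{pred}$, $\symb{zero}$ extend those of the example to perform odometer-style decrement across the $c$ list components, and no variables of function type arise.

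For the inductive step, given $\symb{seed}_{k-1}, \symb{pred}_{k-1}, \symb{zero}_{k-1}$, I would define $\symb{seed}_k\ cs \arrz \symb{trueat}\ cs$ with $\symb{trueat}\ cs\ p \arrz \strue$ (the all-ones bit-vector, representing $2^N - 1$); $\symb{pred}_k\ cs\ f \arrz \symb{predbit}\ cs\ f$, where $\symb{predbit}\ cs\ f\ p$ returns $f(p)$ if some $f(q)$ with $q < p$ is $\strue$ (the borrow is absorbed strictly below $p$) and $\neg f(p)$ otherwise (the borrow propagates through $p$); and $\symb{zero}_k\ cs\ f$ iterates a counter from $\symb{seed}_{k-1}\ cs$ down to $0$ via $\symb{pred}_{k-1}$ and $\symb{zero}_{k-1}$, returning $\strue$ precisely when $f$ is $\sfalse$ at every visited position. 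The auxiliary "some lower bit is set" test inside $\symb{predbit}$ is again a loop driven by the level $k-1$ counting primitives together with applications of $f$. All right-hand sides consist solely of the constants $\strue, \sfalse$, partial applications of defined symbols, or subterms of the arguments $cs, f, p$, so cons-freeness is preserved; disjoint patterns on $\symb{bool}$ values guarantee determinism.

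The main obstacle is engineering $\symb{predbit}$ correctly: because $f$ has function type it admits no pattern matching, so the borrow-propagation recursion must be constructed entirely out of the level $k-1$ counting infrastructure combined with evaluations of $f$ at enumerated positions, while still remaining cons-free and deterministic. Once this auxiliary is in place, a straightforward induction on $i$ shows that the bit-vector returned by $\symb{pred}_k\ cs\ f$ indeed encodes $i - 1$, and the correctness of $\symb{seed}_k$ and $\symb{zero}_k$ follows by their definitions, closing the induction.
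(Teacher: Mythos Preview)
Your proposal is correct and follows essentially the same approach as the paper: induction on $k$, with the base case handled by tuples of sublists as in Example~\ref{ex:lincount}, and the inductive step representing a level-$(k{+}1)$ number by its bit vector, realised as a value of type $\sigma_k \arrtype \symb{bool}$ that returns the bit at each level-$k$ index. The paper's proof idea stops at this representation, whereas you additionally sketch concrete cons-free, deterministic implementations of $\symb{seed}_k$, $\symb{pred}_k$ (via borrow propagation driven by the level-$(k{-}1)$ counter) and $\symb{zero}_k$; this extra detail is sound and fully compatible with the paper's outline.
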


\begin{proof}[Proof Idea]
For $k = 0$, we can count to $a \cdot n^b-1$ using an approach much
like Example~\ref{ex:lincount}.  Given $\Rules_{a,b}^k$, which
represents numbers as a type $\sigma$, we can define
$\Rules_{a,b}^{k+1}$ by representing a number $i$ with bit vector
$b_0\dots b_M$ (with $M = \exp_2^k(a \cdot n^b)$) as the function in
$\sigma \arrtype \symb{bool}$ which maps a ``number'' $j$ to $\strue$
if $b_i = 1$ and to $\sfalse$ otherwise.
\end{proof}

The observation that the functional variables take only one input
argument will be used in Lemma~\ref{lem:lower} below.
The counting techniques from Example~\ref{ex:lincount} and
Lemma~\ref{lem:binary} originate from Jones' work.  However, in a
non-deterministic system, we can do significantly more:

\begin{lemma}\label{lem:nondetcount}
Let $P_0(n) := n$, and for $k \geq 0$, $P_{k+1}(n) := 2^{P_k(n)}-1$.
Then for each $k$, we can represent all $i \in \{0,\dots,P_k(n)\}$ as
a term of type $\symb{bool}^k \arrtype \symb{list}$, and accompanying
counting functions $\symb{seed}_k,\symb{pred}_k$ and $\symb{zero}_k$
can be defined.
\end{lemma}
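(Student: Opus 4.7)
The plan is to proceed by induction on $k$, using non-deterministic reductions of function-typed values to implicitly encode bit vectors.

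\textbf{Base case ($k=0$).} Since $P_0(n) = n$, I represent $i \in \{0,\dots,n\}$ as the suffix of $cs$ of length $i$. The counting functions are then deterministic and in the style of Example~\ref{ex:lincount}: $\symb{seed}_0\ cs \arrz cs$, $\symb{pred}_0\ cs\ (x\symb{::}xs) \arrz xs$, $\symb{zero}_0\ cs\ \symb{[]} \arrz \strue$, and $\symb{zero}_0\ cs\ (x\symb{::}xs) \arrz \sfalse$.

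\textbf{Inductive step.} Assume the level-$k$ representation and counting functions are already defined, so that $T_k := \symb{bool}^k \arrtype \symb{list}$ carries representations of $\{0,\dots,P_k(n)\}$. I represent $i \in \{0,\dots,P_{k+1}(n)\}$ by a term $f$ of type $T_{k+1} = \symb{bool} \arrtype T_k$ with the non-deterministic semantics that $f\ b \arrr v$ for a level-$k$ representation $v$ of some position $j$ is possible iff bit $j$ of $i$ equals $b$; thus a single function encodes a bit vector of length $P_k(n)$ through its reduction set. I define $\symb{seed}_{k+1}\ cs$ to reduce to a partial application $\symb{top}\ cs$ whose only rule is $\symb{top}\ cs\ \strue \arrz \symb{anylevel}_k\ cs$, where $\symb{anylevel}_k$ non-deterministically produces any level-$k$ value via rules that choose $\symb{seed}_k\ cs$ or iterate $\symb{pred}_k$; this represents the all-ones bit vector $P_{k+1}(n)$. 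For $\symb{pred}_{k+1}\ cs\ f$, I non-deterministically guess the position $j$ of the lowest 1-bit of $f$, verify the guess with auxiliary predicates, and output a partial application $\symb{predaux}\ cs\ f\ j$ whose $\strue$-output is any position below $j$ or any $f\ \strue$-reduction above $j$, and whose $\sfalse$-output is $j$ itself or any $f\ \sfalse$-reduction above $j$.

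\textbf{The main obstacle} is defining $\symb{zero}_{k+1}$ with the strict bidirectional semantics of Lemma~\ref{lem:counting}, since the non-deterministic setting makes it easy to witness a property but harder to certify its absence. The key trick will be to use \emph{semi-decisions}: a rule $\symb{accept}\ \strue \arrz \strue$ with no companion rule for $\sfalse$ succeeds precisely when its argument reduces to $\strue$ and is stuck otherwise. Combining this with $f\ b$ and an inductively-defined equality test $\symb{eq}_k$ on level-$k$ values yields partial predicates $\symb{bitis1}_{k+1}\ cs\ f\ j \arrz \symb{accept}\ (\symb{eq}_k\ cs\ (f\ \strue)\ j)$ and $\symb{bitis0}_{k+1}$ analogously, each reducing to $\strue$ iff the tested bit has the claimed value, and silently stuck otherwise. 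Then $\symb{zero}_{k+1}\ cs\ f$ will have two rules: one that guesses a level-$k$ position $j$, applies $\symb{bitis1}_{k+1}$, and pipes the result into $\symb{gobble}\ \strue \arrz \sfalse$ (so it fires to $\sfalse$ iff some 1-bit witness exists); the other iterates $\symb{bitis0}_{k+1}$ over every position using the level-$k$ counter and yields $\strue$ iff every bit is 0. Together these produce exactly the required iff's, and the same iteration-plus-semi-decision pattern also yields $\symb{eq}_k$ and the strict comparisons needed by $\symb{pred}_{k+1}$; verifying cons-freeness throughout and resolving the off-by-one between the $P_k(n){+}1$ level-$k$ values and the $P_k(n)$ bit positions is routine bookkeeping.
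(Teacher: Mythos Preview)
Your overall strategy---encoding a number at level $k{+}1$ as a non-deterministic function whose $\strue$/$\sfalse$ applications enumerate the $1$-positions and $0$-positions of its bit vector, and proceeding by induction on $k$---is exactly the paper's representation. Where you diverge is in how a single bit is read back out. You split this into two \emph{partial} semi-decisions $\symb{bitis1}$ and $\symb{bitis0}$ (each reducing to $\strue$ on the intended bit value and getting stuck otherwise), and then reconstruct the two-sided predicates $\symb{zero}_{k+1}$, $\symb{eq}_k$, and the comparisons by pairing an existential guess-and-verify branch with a universal iteration branch. The paper instead defines a single, output-total test $\symb{bitset}_k$ via a non-terminating retry loop: it evaluates $F\ \strue$ and $F\ \sfalse$, compares each result to $j$ with $\symb{equal}_{k-1}$, returns $\strue$ or $\sfalse$ if either matches, and otherwise recurses on itself. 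Because the representation (padded with a default $O$ so that both applications always reduce to \emph{some} value) guarantees that exactly one of the two sides can ever hit $j$, the only value this loop can produce is the correct bit, and some reduction path always produces it. The payoff is that $\symb{bitset}_k$ behaves like an ordinary boolean function, so $\symb{zero}_k$, $\symb{pred}_k$, and $\symb{equal}_k$ can be written in straightforward deterministic style without the existential/universal case split. Your route is sound but carries more plumbing; in particular, make explicit that the filtered branches of $\symb{predaux}$ (``any $f\ \strue$-reduction above $j$'') rely on a strict level-$k$ comparison that is sound in the strong sense---it must \emph{never} reduce to $\strue$ on equal or smaller arguments---since otherwise the representation of $j$ itself could leak through and corrupt the bit vector of $i-1$.
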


\begin{proof}
The base case ($k=0$) is Example~\ref{ex:lincount}.  For larger $k$,
let $i \in \{0,\dots,2^{P_k(n)}-1\}$ have bit vector $b_1\dots b_{P_k(
n)}$; we say $s : \symb{bool}^k \arrtype \symb{list}$ represents $i$
at level $k$ if for all $1 \leq j \leq P_k(n)$: $b_j = 1$ if{f} $s\ 
\strue \arrr v$ for some $v$ which represents $j$ at level $k-1$, and
$b_j = 0$ if{f} $s\ \sfalse \arrr v$ for such $v$.  This relies on
non-determinism: $s\ \strue$ reduces to a representation of
\emph{every} $j$ with $b_j = 1$.  A representation $O$ of $0$ at level
$k-1$ is used as a default, e.g.~$s\ \sfalse \arrr O$ even if
each $b_j = 0$.  The $\symb{zero}$ and $\symb{pred}$ rules rely on
testing bit values, using:

$\symb{bitset}_k\ cs\ F\ j \arrz \symb{bshelp}_k\ cs\ F\ j\ 
  (\symb{equal}_{k-1}\ cs\ (F\ \strue)\ j)\ 
  (\symb{equal}_{k-1}\ cs\ (F\ \sfalse)\ j)$

$\symb{bshelp}_k\ cs\ F\ j\ \strue\ b \arrz \strue$
\phantom{XYZ}
$\symb{bshelp}_k\ cs\ F\ j\ b\ \strue \arrz \sfalse$

$\symb{bshelp}_k\ cs\ F\ j\ \sfalse\ \sfalse \arrz
  \symb{bitset}_k\ cs\ F\ j$. \\
These rules are non-terminating, but if $F$ represents a
number at level $k$, and $j$ at level $k-1$, then $\symb{bitset}_k\ 
cs\ F\ j$ reduces to exactly one value: $\strue$ if $b_j = 1$, and
$\sfalse$ if $b_j = 0$.
\end{proof} 

Thus, we can count up to arbitrarily high numbers; by
Lemma~\ref{lem:counting}, every decision problem in $\elementary$ is
accepted by a non-deterministic cons-free program of data order $1$.

To obtain a more fine-grained characterisation which still admits
non-deterministic choice, we will therefore consider a restriction of
cons-free programming which avoids Lemma~\ref{lem:nondetcount}.

\begin{definition}
A cons-free program \emph{has unary variable} if all variables
occurring in any rule in $\Rules$ have a type $\iota$ or $\sigma
\arrtype \iota$, with $\typeorder{\iota} = 0$.
\end{definition}

Intuitively, in a program with unary variables, functional variables
cannot be \emph{partially applied}; thus, such variables represent a
function mapping to \emph{data}, and not to some complex structure.
Note that the input type $\sigma$ of a unary variable $x : \sigma
\arrtype \iota$ is allowed to be a product $\sigma_1 \times \dots
\times \sigma_n$.
Lemma~\ref{lem:nondetcount} relies on non-unary variables, but
Lemma~\ref{lem:binary} does not.  We obtain:

\begin{lemma}\label{lem:lower}
Any problem in $\exptime{k}$ is accepted by a (non-deterministic)
extended cons-free program of data order $k$.
\end{lemma}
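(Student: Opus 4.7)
The plan is to combine the counting rules of Lemma~\ref{lem:binary} with the Turing-machine simulation of Lemma~\ref{lem:counting}. Fix $A \in \exptime{k}$ and a deterministic Turing Machine $M$ deciding $A$ in time bounded by $\exp_2^k(a \cdot n^b)$ for some constants $a,b > 0$. Setting $P(n) := \exp_2^k(a \cdot n^b) - 1$, Lemma~\ref{lem:binary} supplies deterministic cons-free rules $\Rules_{a,b}^k$ with counting functions $\symb{seed}$, $\symb{pred}$, $\symb{zero}$ that represent every integer in $\{0,\dots,P(n)\}$, and in which every function variable has a type of the shape $\sigma \arrtype \symb{bool}$.

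First I would verify that $\Rules_{a,b}^k$ has data order exactly $k$: at level $0$ the counters are pairs of lists (order $0$), and each inductive step of the construction wraps the previous representation type $\sigma$ in a new type $\sigma \arrtype \symb{bool}$, raising the type order by one. Next, I would invoke Lemma~\ref{lem:counting} with these counters to simulate $M$ on input $cs$: transitions are encoded as first-order rules over state/symbol/direction constructors, and the auxiliary defined symbols $\symb{state}$, $\symb{position}$, $\symb{tape}$ recurse on the counter representing the current time step. The $\symb{start}$ rule uses $\symb{seed}\ cs$ to obtain a counter for $P(n)$ and then queries whether $M$ reaches an accepting state at some time in $\{0,\dots,P(n)\}$. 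Because the counters live at data order $k$, so does the combined program.

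The main obstacle is verifying that the combined program still satisfies the unary-variable restriction intended by this section. By Lemma~\ref{lem:binary}, every functional variable already occurring in $\Rules_{a,b}^k$ has the unary shape $\sigma \arrtype \symb{bool}$. The simulation rules contributed by Lemma~\ref{lem:counting} only introduce additional variables ranging over the input list $cs$, over counters (which inherit the unary shape), and over order-zero data such as states and tape symbols; no partial application of a higher-order variable is needed, since counters are always fully applied to an index to extract a bit or to obtain the represented value. Consequently the combined program is a (non-deterministic) extended cons-free program with unary variables of data order $k$, and it accepts $A$ exactly when $M$ does.
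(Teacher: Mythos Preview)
Your proposal is correct and follows exactly the approach the paper intends: the paper's ``proof'' is simply the sentence immediately preceding the lemma, namely that Lemma~\ref{lem:binary} uses only unary function variables, so plugging its counting module into the simulation of Lemma~\ref{lem:counting} yields the desired program. Your write-up merely spells out the details of that combination (data-order bookkeeping and the unary-variable check for the simulation layer), which the paper leaves implicit.
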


\section{An upper bound for expressivity}

To see that extended cons-free programs \emph{characterise} the
$\exptime{}$ hierarchy, it merely remains to be seen that every
decision problem that is accepted by a call-by-value cons-free program
with unary variables and of data order $k$, can be solved by a
deterministic Turing Machine---or, equivalently, an algorithm in
pseudo code---running in polynomial time.

\begin{algorithm}[Finding the values for given input in a fixed
  extended cons-free program $\Rules$]\label{alg:cbv}
\textbf{\textit{Input:}} a term $\apps{\symb{start}}{v_1}{v_n} :
  \iota$ with each $v_i$ a data term and $\typeorder{\iota} = 0$. \\
\textbf{\textit{Output:}} all data terms $w$ such that
  $\apps{\symb{start}}{v_1}{v_n} \arrr w$.

Let $\B := \bigcup_{1 \leq i \leq m} \{ w \in \Data \mid v_i
  \suptermeq w \} \cup \bigcup_{\ell \arrz r \in \Rules} \{ w \in
  \Data \mid r \suptermeq w \}$.

For all types $\sigma$ occurring as data in $\Rules$, generate
$\llbracket \sigma \rrbracket$ and a relation $\sqsupseteq$, as
follows:
\begin{itemize}
\item $\llbracket \kappa \rrbracket = \{ s \in \B \mid s : \kappa \}$
  if $\kappa$ is a sort;\quad
  for $A,B \in \llbracket \kappa \rrbracket$, let $A \sqsupseteq B$ if
  $A = B$
\item $\llbracket \sigma \times \tau \rrbracket = \{ (A,B) \mid
  A \in \llbracket \sigma \rrbracket \wedge B \in \llbracket \tau
  \rrbracket \}$; \quad
  $(A_1,A_2) \sqsupseteq (B_1,B_2)$ if $A_1 \sqsupseteq B_1$ and
  $A_2 \sqsupseteq B_2$
\item $\llbracket \sigma \arrtype \tau \rrbracket = \mathcal{P}(\{
  (A,B) \mid A \in \llbracket \sigma \rrbracket \wedge B \in
  \llbracket \tau \rrbracket \})$;\quad
  for $A,B \in \llbracket \sigma \arrtype \tau\rrbracket$ let $A
  \sqsupseteq B$ if $A \supseteq B$
\end{itemize}

For all $f : \sigma_1 \arrtype \dots \arrtype \sigma_m \arrtype \iota
\in \D$, note that we can safely assume that $\arity(f) \geq m-1$.
For all such $f$, and all $A_1 \in \llbracket \sigma_1 \rrbracket,
\dots,A_m \in \llbracket \sigma_m \rrbracket, v \in \llbracket \iota
\rrbracket$, note down a statement: $\apps{f}{A_1}{A_m} \approx v$.
If $\arity(f) = m-1$, also note down $\apps{f}{A_1}{A_{m-1}} \approx
O$ for all $O \in \llbracket \sigma_m \arrtype \iota\rrbracket$.

For all rules $\ell \arrz r$, all $s : \tau$ with $r \suptermeq s$
or $s = r\ x$,
all $O \in \llbracket \tau \rrbracket$ and all substitutions $\gamma$
mapping the variables $x : \sigma \in \Var(s)$ to elements of
$\llbracket \sigma \rrbracket$, note down a statement $s\gamma \approx
O$.  Mark all statements $x\gamma \approx O$ such that $x\gamma
\sqsupseteq O$ as \emph{confirmed}, and all other statements
\emph{unconfirmed}.
Repeat the following steps until no new statements are confirmed
anymore.

\begin{itemize}
\item For every unconfirmed statement $\apps{f}{A_1}{A_n} \approx O$,
  determine all rules $\apps{f}{\ell_1}{\ell_k} \arrz r$ (with $k = n$
  or $k = n-1$) and substitutions $\gamma$ mapping $x : \sigma \in
  \Var(\apps{f}{\ell_1}{\ell_k})$ to an element of $\llbracket \sigma
  \rrbracket$, such that each $A_i = \ell_i\gamma$,
  and mark the statement as confirmed if
  $(\apps{r}{x_{k+1}}{x_n})\gamma[x_{k+1}:=A_{k+1},\dots,x_n:=A_n]
  \approx O$ is confirmed.
\item For every unconfirmed statement $(F\ s)\gamma \approx O$,
  mark the statement as confirmed if there exists $A$ with $(A,O) \in
  \gamma(F)$ and $s\gamma \approx A$ is confirmed.
\item For every unconfirmed statement $(\apps{f}{s_1}{s_n})\gamma
  \approx O$, mark it as confirmed if there are $A_1,\dots,A_n$ such
  that both $\apps{f}{A_1}{A_n} \approx O$ and each $s_i\gamma
  \approx A_i$ are confirmed.
\end{itemize}
Then return all $w$ such that $\apps{\symb{start}}{v_1}{v_n} \approx
w$ is marked confirmed.
\end{algorithm}

\begin{lemma}
Algorithm~\ref{alg:cbv} is in $\exptime{k}$---where $k$ is the data
order of $\Rules$---and returns the claimed output.
\end{lemma}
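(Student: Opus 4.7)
First, I would bound $|\B| = O(n)$: $\B$ collects data subterms of the input values (at most linearly many in $n$) together with a program-dependent constant number of data subterms of rule right-hand sides. An induction on the type order then gives $|\llbracket \sigma \rrbracket| \leq \exp_2^{\typeorder{\sigma}}(p(n))$ for some polynomial $p$ and every type $\sigma$ relevant to the algorithm. At sorts this is immediate, and a product of two types within the bound remains within it. The crucial step is the arrow case: the unary-variable restriction forces every functional type actually instantiated in any $\gamma$ to have the form $\sigma \arrtype \iota$ with $\typeorder{\iota} = 0$, so $|\llbracket \sigma \arrtype \iota \rrbracket| = 2^{|\llbracket \sigma \rrbracket| \cdot |\llbracket \iota \rrbracket|}$ stays within $\exp_2^{\typeorder{\sigma}+1}(p(n))$. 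The number of statements is then bounded by $\exp_2^k(p'(n))$; each pass of the fixed-point loop does polynomial work per statement, and the loop strictly grows the confirmed set until termination, so it runs for at most $\exp_2^k(p'(n))$ rounds. The total runtime thus lies in $\exptime{k}$.

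\textbf{Correctness setup and soundness.} To show the output equals $\{w : \apps{\symb{start}}{v_1}{v_n} \arrr w\}$, I would introduce an approximation relation $v \sim O$ between closed values $v : \sigma$ and $O \in \llbracket \sigma \rrbracket$: on sorts, $v \sim O$ iff $v = O$; on products, componentwise; on $\sigma \arrtype \tau$, $f \sim O$ iff for every $(A,B) \in O$ and every value $a \sim A$ there is some value $b \sim B$ with $f\ a \arrr b$. A direct induction on $\sigma$ gives the crucial monotonicity: $O \sqsupseteq O'$ and $v \sim O$ imply $v \sim O'$. Soundness then reads: whenever $s\gamma \approx O$ has been confirmed, then for every concrete $\bar\gamma$ with $\bar\gamma(x) \sim \gamma(x)$ there is a value $v \sim O$ with $s\bar\gamma \arrr v$. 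I would prove this by induction on the round in which the statement becomes confirmed; the base case $x\gamma \approx O$ with $x\gamma \sqsupseteq O$ is exactly monotonicity, and the three iterative cases match the three clauses of the call-by-value reduction definition one-for-one.

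\textbf{Completeness and the main obstacle.} Conversely, by induction on the call-by-value derivation of $\apps{\symb{start}}{v_1}{v_n} \arrr w$, I would prove that for every sub-reduction $s\bar\gamma \arrr v$ occurring in it, every abstract $\gamma$ with $\bar\gamma(x) \sim \gamma(x)$, and every $O \in \llbracket \sigma \rrbracket$ with $v \sim O$, the statement $s\gamma \approx O$ is eventually confirmed---provided the algorithm generated it, which is guaranteed for the subterms and partial applications $r\ x$ that can actually appear in a derivation. Specialising to the top-level reduction yields that every real answer $w$ is returned; combined with soundness this gives the claimed output. The main obstacle will be constructing the abstract $\gamma(F)$ for higher-order variables during this induction: $\gamma(F)$ must live in $\llbracket \sigma \arrtype \iota \rrbracket$ while simultaneously containing every pair $(A, O)$ that the loop's second confirmation rule will need when replaying later applications of $F$. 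The unary-variable hypothesis is exactly what makes such a $\gamma(F)$ expressible, since $F$ is never partially applied inside right-hand sides and every occurrence of $F$ is an application $F\ s$ with $s$ of ground type; we can then take $\gamma(F)$ to be the set of pairs witnessing the applications of $\bar\gamma(F)$ that actually arise in the derivation, whereas Lemma~\ref{lem:nondetcount} shows that without this restriction the corresponding abstraction is too coarse to remain within $\exp_2^k$.
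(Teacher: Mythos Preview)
Your overall architecture matches the paper's sketch: bound the run time by the cardinalities $|\llbracket\sigma\rrbracket|$, and prove correctness by relating concrete values to abstract ones---the paper phrases this as ``replacing the values $\apps{f}{v_1}{v_n}$ with $n<\arity(f)$ by subsets of the set of all tuples $(A,w)$ such that $\apps{f}{v_1}{v_n}\ A \arrr w$'', which is precisely a canonical instance of your relation $\sim$. The complexity paragraph and the soundness direction are fine and in line with the paper.

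The completeness direction, however, does not go through as you state it. Your inductive hypothesis quantifies over \emph{every} abstract environment $\gamma$ with $\bar\gamma(x)\sim\gamma(x)$, but your own definition of $\sim$ at arrow type is purely a soundness condition: $\gamma(F)=\emptyset$ always satisfies $\bar\gamma(F)\sim\gamma(F)$, yet for such $\gamma$ the statement $(F\ s)\gamma\approx O$ can never be confirmed, since the second iteration rule requires some pair $(A,O)\in\gamma(F)$. So the universally quantified claim is simply false. You notice the tension in your ``main obstacle'' paragraph and propose to \emph{construct} a suitable $\gamma(F)$, but that is incompatible with the ``for every $\gamma$'' induction you set up two sentences earlier; one of the two has to give. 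The clean fix---and what the paper's proof idea describes---is to drop the universal quantifier and instead define, once and for all, a canonical abstraction $\widehat{v}\in\llbracket\sigma\rrbracket$ of each value $v:\sigma$ (data to itself, pairs componentwise, a partially applied $\apps{f}{v_1}{v_n}:\sigma\arrtype\iota$ to the full set $\{(A,w)\mid \apps{f}{v_1}{v_n}\ a\arrr w\text{ whenever }\widehat a\sqsupseteq A\}$), and then prove by induction on the derivation that $s\widehat{\gamma}\approx \widehat v$ is confirmed. With this maximal choice the needed pair $(A,O)$ is guaranteed to lie in $\widehat\gamma(F)$.

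Two smaller inaccuracies in your last paragraph: unary variables do \emph{not} force every occurrence of $F$ to be an application $F\ s$ (an unapplied $F$ may be passed on as an argument), nor do they force the argument $s$ to have base type; what they force is that the \emph{output} type of $F$ is a sort, so that the second component of each pair in $\gamma(F)$ is a data term and the canonical abstraction above is well-defined without a further recursion. The role of the restriction for \emph{correctness} is thus minor; its essential role is in the complexity bound, exactly as you argue in your first paragraph.
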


\begin{proof}[Proof Idea]
The complexity of Algorithm~\ref{alg:cbv} is determined by the size of
each $\llbracket \sigma \rrbracket$.  The proof of soundness and
completeness of the algorithm is more intricate; this fundamentally
relies on replacing the values $\apps{f}{v_1}{v_n}$ with $n <
\arity(f)$ by subsets of the set of all tuples $(A,w)$ with the
property that, intuitively, $\apps{f}{v_1}{v_n}\ A \arrr w$.
\end{proof}

\section{Conclusion}

Thus, we obtain the following variation of Jones' result:

\begin{theorem}
A decision problem $A$ is in $\exptime{k}$ if and only if there is a
cons-free program $\Rules$ of data order $k$ and with unary variables,
which accepts $A$.  This statement holds whether or not the program is
allowed to use non-deterministic choice.
\end{theorem}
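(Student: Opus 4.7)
The theorem is the clean combination of the lower bound of Lemma~\ref{lem:lower} and the upper bound given by the correctness/complexity statement for Algorithm~\ref{alg:cbv}. The plan is simply to separate the two directions of the biconditional and, in each, to verify that the non-determinism parameter does not matter.

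For the forward direction, assume $A \in \exptime{k}$. Lemma~\ref{lem:lower} already produces a cons-free program of data order $k$ with unary variables accepting $A$, built on top of the counting machinery of Lemmas~\ref{lem:counting} and \ref{lem:binary}. I would point out that this construction is actually deterministic: Lemma~\ref{lem:binary} explicitly provides deterministic counting rules whose function-typed variables have type $\sigma \arrtype \symb{bool}$ (hence unary), and the simulation in the proof idea of Lemma~\ref{lem:counting} preserves determinism. So the same witness program serves both the ``no choice'' and the ``choice allowed'' versions of the statement.

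For the backward direction, suppose $\Rules$ is a cons-free program with unary variables of data order $k$ accepting $A$. To decide a given input $\vec{x} = x_1\dots x_n$, I would run Algorithm~\ref{alg:cbv} on $\symb{start}\ (\overline{x_1}\symb{::}\dots\symb{::}\overline{x_n}\symb{::}\symb{[]})$ and accept iff $\strue$ occurs in the returned set. The preceding lemma guarantees both that the algorithm runs within $\exptime{k}$ and that its output is exactly the set of data terms reachable from the starting term under call-by-value reduction. Since the algorithm enumerates all reducts rather than following a single reduction path, it is insensitive to whether $\Rules$ contains non-deterministic rule overlap; the acceptance criterion ``some evaluation reaches $\strue$'' matches precisely the test $\strue \in \mathit{output}$.

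The only point that requires non-routine verification---and which I expect to be the main obstacle hidden inside the upper-bound lemma---is the claim that the $\exptime{k}$ bound for Algorithm~\ref{alg:cbv} still holds in the presence of non-deterministic rules. This is where the \emph{unary-variable} restriction does real work: it keeps the interpretation sets $\llbracket \sigma \rrbracket$ at each type of order $\leq k$ of at-most $k$-fold exponential size, so that the number of statements to be marked, and the saturation loop over them, remain within $\exptime{k}$. Without this restriction, Lemma~\ref{lem:nondetcount} shows the characterisation collapses up to \elementary; with it, the only additional freedom non-determinism provides in the algorithm is that a single statement $\apps{f}{A_1}{A_n} \approx O$ may be confirmed via several distinct rules, which does not affect the asymptotic bound. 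Both inclusions together yield the theorem.
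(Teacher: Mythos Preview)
Your proposal is correct and matches the paper exactly: the theorem is presented without a separate proof, as the immediate combination of Lemma~\ref{lem:lower} for the forward direction and the lemma following Algorithm~\ref{alg:cbv} for the backward direction, and your handling of the determinism parameter in each direction is right.

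One small inaccuracy worth noting: the unary-variable restriction is not what bounds the cardinalities of the sets $\llbracket \sigma \rrbracket$---those are at most $k$-fold exponential simply because every type occurring in a program of data order $k$ has order $\leq k$. Where unary variables actually do their work is in the \emph{soundness and completeness} of Algorithm~\ref{alg:cbv} under non-determinism: with only unary functional variables, a passed-around functional value can be faithfully abstracted as a flat set of (argument, data-result) pairs, which is exactly what the algorithm's semantics $\llbracket \sigma \arrtype \iota \rrbracket$ records. Without the restriction, partially applying a non-deterministic functional value yields another functional value whose behaviour is not captured by a single element of $\llbracket \sigma \arrtype \tau \rrbracket$, and no $\exptime{k}$ procedure can exist at all (this is precisely the content of Lemma~\ref{lem:nondetcount}).
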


In addition, we have adapted Jones' language to be more permissive,
admitting additional constructors and pattern matching.  This
makes it easier to specify suitable programs.

Using non-deterministic programs is a step towards further
characterisations; in particular, we intend to characterise
$\nexptime{k} \subseteq \exptime{k+1}$ using restricted
non-deterministic cons-free programs of data order $k+1$.

\bibliography{references}
\bibliographystyle{plain}

\end{document}